\documentclass[twocolumn]{autart} 
\usepackage{url}
\usepackage{amsmath,amssymb,amsfonts}
\usepackage{algorithmic}
\usepackage{graphicx}
\usepackage{textcomp}
\usepackage{subcaption}
\usepackage{booktabs}
\usepackage{multirow}
\newenvironment{proof}{\noindent\textit{Proof.}}{\hfill$\square$\par}
\newtheorem{definition}{\textbf{Definition}}

\newtheorem{Lemma}{\bf{Lemma}}
\newtheorem{Assumption}{\bf{Assumption}}
\newtheorem{Theorem}{\bf{Theorem}}

\newtheorem{remark}{\bf{Remark}}

\begin{document}
\sloppy
\begin{frontmatter}

\title{Distributionally Robust Data-Driven Predictive Control for Stochastic LTI Systems}

\thanks[funding]{Corresponding author: Mirhan Urkmez.}

\author[aau]{Mirhan Urkmez}\ead{mu@es.aau.dk},    
\author[aau]{Shahab Heshmati-Alamdari}\ead{shhe@es.aau.dk},               

\address[aau]{ Section of Automation \& Control, Department of Electronic Systems, Aalborg University}  

\begin{keyword}                           
Data-driven control, distributionally robust
optimization, predictive control.               
\end{keyword}                             

\begin{abstract}
We propose a distributionally robust data-driven predictive control framework for stochastic linear time-invariant systems with unknown dynamics and disturbance distributions. We use an offline trajectory to fit the subspace predictive control (SPC) predictor via least squares and construct an empirical distribution of the prediction residuals as a proxy for the unknown disturbance distribution. We then center a Wasserstein ambiguity set around this estimate and minimize the worst-case expected cost while enforcing probabilistic output constraint satisfaction over all distributions in the set. The resulting problem admits a tractable reformulation with an equivalent direct data-driven form, eliminating the need for explicit predictor identification. Using finite-sample concentration results, we provide a data-driven Wasserstein radius such that, with high probability, the true expected cost is bounded above by the tractable objective and output constraints are satisfied with respect to the true disturbance distribution. Numerical simulations validate the framework against existing methods under various disturbance conditions and cost functions.
\end{abstract}

\end{frontmatter}

\section{Introduction}

The increasing availability of data from dynamical systems has driven growing interest in Data-Driven Predictive Control (DDPC) methods. Historically, data has been used to identify a system model, which is then used to design a predictive control law. These approaches are called indirect DDPC methods \cite{9705109}, and despite their success across a wide range of industrial applications, they require an accurate mathematical model of the system, which can be difficult and costly to obtain in practice. Recently, direct DDPC methods have attracted significant interest \cite{9705109}, where model identification is bypassed entirely and data is used directly to predict future trajectories and compute control actions.

A large body of direct DDPC methods has been developed building on Willems' fundamental lemma \cite{willems2005note}, which establishes that under a persistence of excitation condition, the entire behavior of a deterministic LTI system can be parametrized via a Hankel matrix of offline data through a coefficient vector. Following this,a direct DDPC framework known as Data-Enabled Predictive Control (DeePC) was proposed in \cite{8795639} 
with an equivalence to Model Predictive Control (MPC) for deterministic 
LTI systems. However, when the system is subject to stochastic disturbances, the Hankel-based parametrization is no longer exact and predictions can deviate significantly from the true system behavior.

To address this, regularization techniques have been proposed. In \cite{8795639}, $\ell_1$
 regularization is proposed to enforce consistency with the underlying system. Several equivalence results between direct DDPC and Subspace Predictive Control (SPC) \cite{FAVOREEL19994004} have guided the design of regularization terms. SPC constructs a multi-step predictor by fitting a parametric model to offline trajectory data via least squares. Since the identification process reduces the effect of noise compared to using raw data , direct DDPC methods that establish an equivalence with SPC become less sensitive to noise. In \cite{9705109}, a connection with the SPC predictor is established and a corresponding regularization is introduced, and \cite{10896586} derives an analogous equivalence with the causal SPC predictor. 

Another line of work provides formal guarantees for direct DDPC under explicit disturbance assumptions, including bounded noise \cite{9109670,berberichConst} and stochastic formulations \cite{9999463,11192654}, the latter requiring exact knowledge of the disturbance distribution. A distributionally robust formulation for nonlinear systems is proposed in \cite{9488221}, which does not require this knowledge though it requires multiple offline trajectories with identical initial conditions and inputs.

In this work, we propose a distributionally robust direct DDPC formulation for LTI systems subject to stochastic disturbances with unknown probability distributions, obtained via an equivalent SPC formulation. The main contributions are as follows. (i) We propose a distributionally robust SPC formulation with a Wasserstein ambiguity set over future disturbance distributions, constructed around an empirical distribution derived from the prediction residuals of an SPC predictor estimated via least squares from an offline trajectory. The formulation minimizes the worst-case expected value of a function of the actual system output while enforcing probabilistic output constraint satisfaction for all distributions in the ambiguity set. (ii) We derive a tractable reformulation and show that it admits an equivalent direct DDPC form. (iii) Using finite-sample concentration results, we provide a data-driven choice of the ambiguity set radius that guarantees the true disturbance distribution is contained within the ambiguity set with high probability.  This in turn yields a probabilistic upper bound on the expected cost and a constraint satisfaction guarantee, both holding with high probability.


\subsection{Notation}
 For a sequence $\{z_k\}_{k=1}^T$ with $z_k\in\mathbb{R}^\eta$, we write $z_{[k,j]}=\begin{bmatrix}z_k^\top&\cdots&z_j^\top\end{bmatrix}^\top$ and $z:=z_{[1,T]}$. The Hankel matrix of depth $L$ is $H_L(z)=\begin{bmatrix}z_{[1,L]}&\cdots&z_{[T-L+1,T]}\end{bmatrix}\in\mathbb{R}^{\eta L\times(T-L+1)}$. We write $(\cdot)_+:=\max\{\cdot,0\}$ and $\|\cdot\|_r$ for the $\ell_r$-norm.

\section{Problem Formulation}
\label{sec:problem}

We consider the stochastic discrete-time LTI system 
\begin{equation}
\begin{aligned}
x_{k+1} &= A x_k + B u_k + w_k, \\
y_k     &= C x_k + D u_k + v_k,
\end{aligned}
\label{eq:lti}
\end{equation}
where $x_k\in \mathbb{R}^n$, $u_k\in \mathbb{R}^m$, $y_k\in \mathbb{R}^p$ are 
states, inputs, and outputs of the system, respectively. The terms $w_k\in \mathbb{R}^n$ 
and $v_k \in \mathbb{R}^p$ represent process and measurement noise, drawn from unknown 
probability distributions $\mathbb{P}_w$ and $\mathbb{P}_v$ over support sets 
$W \subseteq \mathbb{R}^n$ and $V \subseteq \mathbb{R}^p$. We assume that $\{w_k\}$ 
and $\{v_k\}$ are i.i.d.\ sequences over time. The system matrices 
$A \in \mathbb{R}^{n \times n}$, $B \in \mathbb{R}^{n \times m}$, 
$C \in \mathbb{R}^{p \times n}$, $D \in \mathbb{R}^{p \times m}$ are unknown.

We aim to develop a receding horizon predictive controller for~\eqref{eq:lti}, where 
at each time step $k$, given the most recent input-output measurements
\begin{equation}
\label{eq:pfnotation}
u_p := u_{[k-T_p,\,k-1]}, \qquad y_p := y_{[k-T_p,\,k-1]},
\end{equation}
of length $T_p$, an optimization problem is solved over the future input 
sequence $u_f := u_{[k,\,k+T_f-1]}$ with prediction horizon $T_f$ minimizing a cost function in terms of the inputs $u_f$ and future  outputs $y_f := y_{[k,\,k+T_f-1]}$. The 
first element of the optimal input sequence is applied to the system, and 
the past window is updated at the next step.  To express $y_f$ in terms of $u_p$, $y_p$, $u_f$, and the disturbances, 
we make the following assumption.
\begin{Assumption}\label{as:obs}
\!\!The pair $(A, C)$ is observable and $T_p \!\ge\! n$.
\end{Assumption}
Under Assumption~1, the future output $y_f$ admits the multi-step predictor 
representation
\begin{equation}
\label{eq:spc}
y_f = K m_f + \xi_f,
\end{equation}
where $m_f = \begin{bmatrix} u_p^\top & y_p^\top & u_f^\top \end{bmatrix}^\top$. 
The exact expressions for $K$ and $\xi_f$ are obtained by eliminating the 
initial state via the observability of $(A,C)$ and are omitted for brevity. 
The matrix $K$ is the unknown multi-step predictor, which is a function of 
the system matrices $(A,B,C,D)$. The term $\xi_f \in \mathbb{R}^{pT_f}$ is 
the multi-step disturbance, which is a function of the past disturbances 
$(w_p, v_p)$ and future disturbances $(w_f, v_f)$, where these vectors are 
defined analogously to~\eqref{eq:pfnotation}. The distribution of $\xi_f$ 
is denoted by $\mathbb{P}_{\xi_f}$ and supported on $\Xi \subseteq \mathbb{R}^{pT_f}$. 
We also define $\mathbb{P}_{y_f|m_f}$ as the probability distribution of $y_f$ 
given a fixed $m_f$, supported on $\mathcal{Y} \subseteq \mathbb{R}^{pT_f}$.

Since the system matrices are unknown, we assume the availability of an 
offline input-output trajectory $\{u_k^d,y_k^d\}_{k=1}^{T}$ of length $T$, 
collected prior to the operation of the controller. Setting $L := T_p + T_f$, the Hankel matrices 
$H_L(u^d)$ and $H_L(y^d)$ are partitioned into past and future blocks as
\begin{equation}
\label{eq:hankels}
H_L(u^d)=
\begin{bmatrix}
U_p \\ U_f
\end{bmatrix}, \qquad
H_L(y^d)=
\begin{bmatrix}
Y_p \\ Y_f
\end{bmatrix},
\end{equation}
where $U_p \in \mathbb{R}^{mT_p \times N}$, $U_f \in \mathbb{R}^{mT_f \times N}$ 
correspond to the first and last $T_p$, $T_f$ block rows of $H_L(u^d)$, 
$Y_p$, $Y_f$ are defined analogously, and $N = T - L + 1$. 

\subsection{Distributionally Robust Formulation}
\label{sec:dr}

Since $y_f$ is uncertain under stochastic disturbances, we consider the 
 expected cost $\mathbb{E}_{\mathbb{P}_{y_f|m_f}}\!\left[ f_1(u_f) + 
 f_2(y_f) \right]$ as our objective 
where $f_1$ and $f_2$ are the input and output cost functions, respectively. 
However, since both $K$ and $\mathbb{P}_{y_f|m_f}$ are unknown, 
$\mathbb{P}_{y_f|m_f}$ is inaccessible and this objective cannot be minimized 
directly. Moreover, any finite-sample estimate of $\mathbb{P}_{y_f|m_f}$ 
constructed from offline data will inevitably carry estimation error. To 
robustify against this uncertainty, we adopt a distributionally robust 
formulation in which we construct an ambiguity set $\mathcal{P}$ of candidate 
distributions centered around a data-driven empirical estimate of 
$\mathbb{P}_{y_f|m_f}$, and minimize the worst-case expected cost over 
$\mathcal{P}$.

We consider output constraints of the form $h(y_f) \le 0$. Since $y_f$ is stochastic, this constraint cannot be 
enforced deterministically. Instead, we require it to hold in a probabilistic 
sense via the Conditional Value-at-Risk (CVaR), defined below, which serves 
as an upper bound on the Value-at-Risk and therefore guarantees satisfaction 
of the corresponding chance constraint~\cite{nemirovski}. We require 
$\mathrm{CVaR}_{1-\beta}^{Q}\!\left( h(y_f) \right) \leq 0$ to hold for all 
$Q \in \mathcal{P}$.

\begin{definition}[Conditional Value-at-Risk]
Let $\omega \in \Omega \subseteq \mathbb{R}^r$ be a random variable with 
 distribution $\mathbb{P}_{\omega}$. For a function 
$\phi : \mathbb{R}^r \to \mathbb{R}$, the Conditional Value-at-Risk (CVaR) 
at confidence level $1-\beta \in (0,1)$ is defined as
\begin{equation}
\mathrm{CVaR}^{1-\beta}_{\mathbb{P}_{\omega}}\big(\phi(\omega)\big)\! := 
\inf_{t \in \mathbb{R}} \left\{ \frac{1}{\beta}\, \mathbb{E}_{\mathbb{P}_{\omega}} 
\big[ (\phi(\omega)\! - t)_+ \big] \!+\! t \right\},
\end{equation}
where $(\cdot)_+ := \max\{ \cdot, 0 \}$.
\end{definition}

The resulting distributionally robust optimization problem is formulated as
\begin{subequations}
\label{eq:dr_general}
    \begin{align}
    \min_{u_f} \; & f_1(u_f) + \sup_{Q \in \mathcal{P}} \mathbb{E}_{Q}\!\left[ f_2(y_f) \right]  \\
    \text{s.t.} \; & u_f \in \mathcal{U}, \\
    & \sup_{Q \in \mathcal{P}} \mathrm{CVaR}^{1-\beta}_{Q}\!\left( h(y_f) \right) \leq 0,
    \end{align}
\end{subequations}
where $\mathcal{U} \subseteq \mathbb{R}^{mT_f}$ is the input constraint set. 
\subsection{Ambiguity Set Construction}
\label{sec:ambiguity}
The construction of $\mathcal{P}$ proceeds in two stages. First, we use offline data to compute an empirical estimate of the  distribution  $\mathbb{P}_{y_f|m_f}$.
Second, we define a neighborhood of distributions around this estimate to account for estimation errors. Define the 
offline regressor matrix
\begin{equation}
M := \begin{bmatrix} U_p \\ Y_p \\ U_f \end{bmatrix} \in \mathbb{R}^{(m(T_p+T_f) + pT_p) \times N},
\end{equation}
where $N = T - L + 1$ and $L = T_p + T_f$. The SPC predictor is estimated by 
least squares:
\begin{equation}
\hat{K} := \underset{\tilde{K}}{\arg\min} \left\| Y_f - \tilde{K} M \right\|_F^2 
= Y_f M^{\dagger},
\label{eq:ls_khat}
\end{equation}
where $M^{\dagger}$ is the Moore-Penrose pseudoinverse of $M$. Using this estimate, we generate a set of approximate residuals $\hat{\xi}_f^{(i)}$ using offline data as
\begin{equation}
\hat{\xi}_f^{(i)} := \left(Y_f - \hat{K} M\right) e_i, \qquad i = 1, \dots, N,
\label{eq:empirical_residuals}
\end{equation}
where $e_i \in \mathbb{R}^N$ is the $i$-th standard basis vector. For a given 
$m_f$, we form predictions of $y_f$ by shifting the SPC prediction 
$\hat{K}m_f$ with the approximate residual samples
\begin{equation}\
\hat{y}_f^{(i)}(m_f) := \hat{K} m_f + \hat{\xi}_f^{(i)}, \qquad i = 1, \dots, N.
\label{eq:empirical_yf}
\end{equation}
The resulting empirical distribution
\begin{equation}
\hat{\mathbb{P}}_{y_f \mid m_f} := \frac{1}{N} \sum_{i=1}^{N} \delta_{\hat{y}_f^{(i)}(m_f)}
\label{eq:empirical_distribution}
\end{equation}
serves as our prediction distribution for $y_f$ given $m_f$, around which we 
construct the ambiguity set $\mathcal{P}$. For the finite-sample 
analysis in Section~\ref{sec:guarantees}, we  define 
\begin{equation}
\bar{\mathbb{P}}_{y_f \mid m_f} := \frac{1}{N} \sum_{i=1}^{N} \delta_{\bar{y}_f^{(i)}(m_f)},
\qquad \bar{y}_f^{(i)}(m_f) := K m_f + \bar{\xi}_f^{(i)},
\label{eq:oracle_distribution}
\end{equation}
where $\bar{\xi}_f^{(i)} = \left(Y_f - K M\right) e_i$. This is unknown since 
$K$ is unknown, but serves as an intermediate reference in the finite-sample 
analysis. We construct $\mathcal{P}$ as a Wasserstein ball centered at $\hat{\mathbb{P}}{y_f \mid m_f}$, as the Wasserstein metric enables a tractable reformulation of the optimization problem \eqref{eq:dr_general}.
\begin{definition}
Let $r \in \left[1, \infty \right]$ and $\mathcal{M}(\mathcal{Y})$ be the set of all probability measures $Q$ 
supported on $\mathcal{Y} \subseteq \mathbb{R}^{pT_f}$ with 
$\mathbb{E}_Q[\|y\|_r] < \infty$. The r-Wasserstein metric 
$d_{W_r} : \mathcal{M}(\mathcal{Y}) \times \mathcal{M}(\mathcal{Y}) \to \mathbb{R}_{\ge 0}$ 
is defined as
\begin{equation}
d_{W_r}(Q_1, Q_2) := \bigg(\inf_{\Pi} \left\{ \int_{\mathcal{Y}^2} \|y_1 - y_2\|_r^r \, 
\Pi(\mathrm{d}y_1, \mathrm{d}y_2)\right\}\bigg)^{1/r},
\label{eq:wasserstein}
\end{equation}
where $\Pi$ ranges over all joint measures on $\mathcal{Y} \times \mathcal{Y}$ 
with marginals $Q_1$ and $Q_2$.
\end{definition}

The ambiguity set is defined as the Wasserstein ball of radius $\varepsilon > 0$ 
centered at $\hat{\mathbb{P}}_{y_f \mid m_f}$ for some $r \in \left[1, \infty \right]$
\begin{equation}
\mathcal{P} \!:=\! \mathcal{B}_\varepsilon\!\left(\hat{\mathbb{P}}_{y_f \mid m_f}\right) 
\!:=\! \Big\{ Q \in \mathcal{M}(\mathcal{Y}) \;\big|\; 
d_{W_r}\!\left(Q,\, \hat{\mathbb{P}}_{y_f \mid m_f}\right) \le \varepsilon \Big\}.
\label{eq:wasserstein_ball}
\end{equation}
Substituting into~\eqref{eq:dr_general} gives the distributionally robust SPC problem
\begin{subequations}
\label{eq:dr_wass}
    \begin{align}
    \min_{u_f} \; & f_1(u_f) + \sup_{Q \in \mathcal{B}_\varepsilon(\hat{\mathbb{P}}_{y_f \mid m_f})} 
    \mathbb{E}_{Q}\!\left[ f_2(y_f) \right]  \\
    \text{s.t.} \; & u_f \in \mathcal{U}, \\
    & \sup_{Q \in \mathcal{B}_\varepsilon(\hat{\mathbb{P}}_{y_f \mid m_f})} 
    \mathrm{CVaR}^{1-\beta}_{Q}\!\left( h(y_f) \right) \leq 0.
    \end{align}
\end{subequations}
\section{Main Results}
\subsection{Tractable Reformulation}
\label{sec:tract}
 We now provide a tractable reformulation of~\eqref{eq:dr_wass}.

\begin{Theorem}
\label{thm:tractable_spc}
Assume that $f_2$ and $h$ are convex and Lipschitz continuous with constants $L_{\mathrm{obj}} > 0$ and $L_{\mathrm{con}} > 0$ with respect to the $r$-norm, respectively. Then, the optimal value of the distributionally robust SPC problem~\eqref{eq:dr_wass} is upper bounded by the optimal value of
\begin{subequations}\label{eq:tractable_spc}
\begin{align}
&\min_{u_f,\,\tau,\,s_i} \quad  f_1(u_f) + \frac{1}{N}\sum_{i=1}^{N} f_2\!\left(  \hat{K} m_f + \hat{\xi}_f^{(i)}\right) + L_{\mathrm{obj}}\,\varepsilon \\
&\text{s.t.} \quad
 u_f \in \mathcal{U},\label{eq:tr_inp}\\
& -\tau\beta + L_{\mathrm{con}}\,\varepsilon + \frac{1}{N}\sum_{i=1}^{N} s_i \le 0, \label{eq:tr_cvar}\\
& \tau + h\!\left(  \hat{K} m_f + \hat{\xi}_f^{(i)}\right) \le s_i, \quad \forall i = 1,\dots,N, \label{eq:tr_slack}\\
& s_i \ge 0, \quad \forall i = 1,\dots,N. \label{eq:tr_slack_pos}
\end{align}
\end{subequations}
where $\hat{K}$ is given by \eqref{eq:ls_khat}. 
Furthermore, any feasible solution $u_f$ to~\eqref{eq:tractable_spc} satisfies
\begin{equation}
    \mathrm{CVaR}_{Q}^{1-\beta}\!\left( h(y_f) \right) \le 0, \quad 
    \forall Q \in \mathcal{B}_\varepsilon\!\left(\hat{\mathbb{P}}_{y_f \mid m_f}
    \right).
\end{equation}
\end{Theorem}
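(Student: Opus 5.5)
The plan is to bound each worst-case term in \eqref{eq:dr_wass} separately. The tool is the standard Lipschitz (Kantorovich--Rubinstein type) estimate for Wasserstein balls. Fix $u_f$, hence $m_f$, and write $\hat y^{(i)} := \hat{K}m_f+\hat{\xi}_f^{(i)}$ as in \eqref{eq:empirical_yf}.

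\textbf{Step 1: a Lipschitz bound over the ball.} Let $g$ be $L$-Lipschitz with respect to $\|\cdot\|_r$, and let $Q\in\mathcal{B}_\varepsilon(\hat{\mathbb{P}}_{y_f\mid m_f})$. Take any coupling $\Pi$ of $Q$ and $\hat{\mathbb{P}}_{y_f\mid m_f}$. Then
\begin{equation*}
\mathbb{E}_Q[g]-\tfrac1N\textstyle\sum_i g(\hat y^{(i)}) = \int (g(y_1)-g(y_2))\,\Pi(\mathrm{d}y_1,\mathrm{d}y_2) \le L\int\|y_1-y_2\|_r\,\mathrm{d}\Pi .
\end{equation*}
By Jensen (or H\"older), $\int\|y_1-y_2\|_r\,\mathrm{d}\Pi\le(\int\|y_1-y_2\|_r^r\,\mathrm{d}\Pi)^{1/r}$ for $r\ge1$. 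Taking the infimum over couplings gives
\begin{equation*}
\mathbb{E}_Q[g]\le \tfrac1N\textstyle\sum_i g(\hat y^{(i)})+L\varepsilon .
\end{equation*}
For $r=\infty$ we use the essential supremum of $\|y_1-y_2\|$ instead, and the same bound holds. One technical point is that the infimum over couplings need not be attained. We handle this with an $\eta$-optimal coupling and then let $\eta\to0$. Convexity of $g$ is not needed for this step.

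\textbf{Step 2: the objective.} Apply Step~1 with $g=f_2$ and $L=L_{\mathrm{obj}}$. This shows that for each feasible $u_f$ the objective of \eqref{eq:dr_wass} is at most the objective of \eqref{eq:tractable_spc}.

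\textbf{Step 3: the CVaR constraint.} Fix $(u_f,\tau,s)$ feasible for \eqref{eq:tractable_spc}. The variable $\tau$ plays the role of $-t$ in the CVaR definition, so we set $t=-\tau$. The function $y\mapsto(h(y)+\tau)_+$ is $L_{\mathrm{con}}$-Lipschitz, since $(\cdot)_+$ is $1$-Lipschitz. Step~1 together with \eqref{eq:tr_slack}--\eqref{eq:tr_slack_pos} then gives, for every $Q$ in the ball,
\begin{equation*}
\mathbb{E}_Q[(h(y_f)+\tau)_+]\le \tfrac1N\textstyle\sum_i (h(\hat y^{(i)})+\tau)_+ + L_{\mathrm{con}}\varepsilon \le \tfrac1N\sum_i s_i+L_{\mathrm{con}}\varepsilon\le \tau\beta ,
\end{equation*}
where the last inequality is \eqref{eq:tr_cvar}. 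Hence
\begin{equation*}
\mathrm{CVaR}^{1-\beta}_Q(h(y_f))\le \tfrac1\beta\mathbb{E}_Q[(h(y_f)+\tau)_+]-\tau\le 0 .
\end{equation*}
This holds uniformly in $Q$, which proves the second claim. It also shows that every feasible $u_f$ of \eqref{eq:tractable_spc} is feasible for \eqref{eq:dr_wass}, because the input constraint \eqref{eq:tr_inp} is shared.

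\textbf{Step 4: comparing optimal values.} The feasible set of \eqref{eq:tractable_spc}, projected onto $u_f$, is contained in the feasible set of \eqref{eq:dr_wass}. On that set the objective of \eqref{eq:dr_wass} is dominated by the objective of \eqref{eq:tractable_spc}. So the optimal value of \eqref{eq:dr_wass} is at most that of \eqref{eq:tractable_spc}, taking the value $+\infty$ if \eqref{eq:tractable_spc} is infeasible.

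\textbf{Main obstacle.} The delicate part is Step~1 in full generality. We must check that every $Q$ in the ball integrates $g$; this follows from Lipschitz growth and the finite first moment built into $\mathcal{M}(\mathcal{Y})$. We must also handle $r>1$ and $r=\infty$ via the monotonicity of Wasserstein distances in $r$, together with the near-optimal-coupling argument.
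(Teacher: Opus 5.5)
Your proof is correct, but it takes a different route from the paper's.

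\textbf{The paper's route.} The paper obtains the objective bound by invoking the Wasserstein duality theorem of \cite{Gao2016DistributionallyRS} together with Lemma~A.2 of \cite{9488221}. This is where convexity of $f_2$ and $h$ enters, and it additionally gives equality when $\mathcal{Y}=\mathbb{R}^{pT_f}$. For the CVaR constraint, the paper writes out the infimum over $t$ and exchanges $\sup_Q$ and $\inf_t$ via the max--min inequality. It then applies the same duality bound to the inner supremum and introduces slacks.

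\textbf{Your route.} You prove the needed one-sided bound from scratch with a coupling argument, $\mathbb{E}_Q[g]-\mathbb{E}_{\hat{\mathbb{P}}_{y_f|m_f}}[g]\le L\, d_{W_1}\le L\, d_{W_r}\le L\varepsilon$. This uses only Lipschitz continuity, so it shows convexity is unnecessary for the statement as posed. For the constraint, you fix a feasible $(\tau,s)$ and plug $t=-\tau$ into the CVaR definition for every $Q$ in the ball. That is the pointwise version of the paper's max--min step. It also makes the sign convention explicit, which the paper's appendix glosses over: it works with $(h-\tau)_+$, while \eqref{eq:tr_slack} involves $\tau+h$. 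Your Step~4 spells out the feasible-set inclusion that the paper leaves implicit.

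\textbf{Trade-off.} The duality route yields tightness information about the objective bound. Yours is self-contained and more general, since it needs no convexity and covers all $r\in[1,\infty]$ through the single Jensen/H\"older step.

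\textbf{Minor point.} The $\eta$-optimal coupling is not needed. Your inequality holds for every coupling, and its left-hand side does not depend on $\Pi$, so you can take the infimum of the right-hand side directly.
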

\begin{proof}
Proof is given in the Appendix~\ref{app:th2}.
\end{proof}
We now show that the tractable SPC problem~\eqref{eq:tractable_spc}, which requires knowledge of $\hat{K}$, admits an 
equivalent reformulation solely in terms of the offline Hankel matrices, 
without explicit identification of $\hat{K}$.

\begin{Theorem}
\label{thm:tractable_deepc}
Problem~\eqref{eq:tractable_spc} is equivalent to
\begin{subequations}\label{eq:tractable_deepc}
\begin{align}
\min_{g,\,\tau,\,s_i} \quad &\! f_1(U_f g)\! + \!\frac{1}{N}\sum_{i=1}^{N} f_2\!\left(Y_f g + Y_f(I-P)e_i\right) + L_{\mathrm{obj}}\,\varepsilon\\
\text{s.t.} \quad
& U_p g = u_p,\\
& Y_p g = y_p,\\
& (I-P)g = 0,\\
& U_f g \in \mathcal{U},\\
& -\tau\beta + L_{\mathrm{con}}\,\varepsilon + \frac{1}{N}\sum_{i=1}^{N} s_i \le 0,\\
& \tau + h\!\left(Y_f g + Y_f(I-P)e_i\right) \le s_i, \quad \forall i = 1,\dots,N,\\
& s_i \ge 0, \quad \forall i = 1,\dots,N,
\end{align}
\end{subequations}
where $P := M^{\dagger} M$ is the orthogonal projector onto the row space of $M$.
\end{Theorem}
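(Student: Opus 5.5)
The plan is to exhibit a bijection between feasible points of \eqref{eq:tractable_spc} and feasible points of \eqref{eq:tractable_deepc} that preserves the objective and the $(\tau,s_i)$ variables. The two identities to establish are $\hat{K} m_f = Y_f g$ whenever $g$ satisfies the equality constraints of \eqref{eq:tractable_deepc}, and $\hat{\xi}_f^{(i)} = Y_f(I-P)e_i$. The second one is immediate: by \eqref{eq:ls_khat} and \eqref{eq:empirical_residuals}, $\hat{\xi}_f^{(i)} = (Y_f - Y_f M^\dagger M)e_i = Y_f(I-P)e_i$ with $P = M^\dagger M$. Once both identities hold, every term $f_2(\hat K m_f + \hat\xi_f^{(i)})$ and $h(\hat K m_f + \hat\xi_f^{(i)})$ becomes $f_2(Y_f g + Y_f(I-P)e_i)$ and $h(Y_f g + Y_f(I-P)e_i)$. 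In addition, $f_1(u_f)=f_1(U_f g)$ and $u_f\in\mathcal U$ becomes $U_f g\in\mathcal U$.

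\emph{From \eqref{eq:tractable_deepc} to \eqref{eq:tractable_spc}.} Let $g$ be feasible. The constraints $U_p g = u_p$ and $Y_p g = y_p$, together with the definition $u_f := U_f g$, give $Mg = m_f$. The constraint $(I-P)g=0$ gives $g = Pg = M^\dagger M g = M^\dagger m_f$. Therefore $Y_f g = Y_f M^\dagger m_f = \hat K m_f$. Keeping the same $\tau, s_i$ yields a feasible point of \eqref{eq:tractable_spc} with identical cost.

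\emph{From \eqref{eq:tractable_spc} to \eqref{eq:tractable_deepc}.} Given feasible $(u_f,\tau,s_i)$ with $m_f$ built from $u_p,y_p,u_f$, the natural choice is $g := M^\dagger m_f$. Then $(I-P)g = (M^\dagger - M^\dagger M M^\dagger)m_f = 0$ by the Penrose identity, and $Y_f g = \hat K m_f$. The remaining requirement is $Mg = m_f$, i.e.\ $MM^\dagger m_f = m_f$, which holds if and only if $m_f \in \operatorname{range}(M)$.

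This range condition is the main obstacle. I would handle it by requiring $M$ to have full row rank, which is the standard persistency-of-excitation condition implicitly needed for $\hat K$ to be a meaningful least-squares estimate; with noisy data and $N$ larger than the row dimension of $M$ it holds generically. Under full row rank, $MM^\dagger = I$, so $Mg=m_f$ for every $m_f$, and the constraints $U_pg=u_p$, $Y_pg=y_p$, $U_fg=u_f$ all hold. If the paper does not state this assumption, I would add it explicitly as the hypothesis under which the equivalence is claimed.

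With the two maps established, they are mutually inverse on the $u_f \leftrightarrow g$ component, because $g$ is uniquely determined as $M^\dagger m_f$ by $Mg=m_f$ and $(I-P)g=0$. They also preserve objective values. Hence the optimal values coincide and optimizers correspond via $u_f^\star = U_f g^\star$, which proves the equivalence.
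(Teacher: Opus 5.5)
Your argument is correct and follows the paper's route: substitute $\hat{\xi}_f^{(i)} = Y_f(I-P)e_i$ and $\hat K m_f = Y_f g$ (with $u_f = U_f g$) into \eqref{eq:tractable_spc}. The paper cites \cite{9705109} for the correspondence $\hat K m_f = Y_f g$ instead of deriving it from $g = M^\dagger m_f$. The direction from \eqref{eq:tractable_spc} to \eqref{eq:tractable_deepc} of that cited equivalence tacitly requires the same condition $m_f \in \operatorname{range}(M)$ (e.g.\ full row rank of $M$) that you correctly make explicit.
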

\begin{proof}
It was shown in \cite{9705109} that $\hat{K} m_f$ can equivalently be written as $Y_f g$ if and only if $g$ satisfies
\begin{equation*}
    U_p g = u_p, \quad Y_p g = y_p, \quad U_fg=u_f, \quad (I-P)g = 0.
\end{equation*}
Furthermore, using $\hat{K} = Y_f M^{\dagger}$ and $P = M^{\dagger}M$, the 
empirical residuals \eqref{eq:empirical_residuals} satisfy
\begin{equation*}
    \hat{\xi}_f^{(i)} = \left(Y_f - \hat{K}M\right)e_i = 
    Y_f\left(I - M^{\dagger}M\right)e_i = Y_f(I-P)e_i.
\end{equation*}
Consequently, the empirical output samples \eqref{eq:empirical_yf} become
\begin{equation*}
    \hat{y}_f^{(i)} = \hat{K} m_f + \hat{\xi}_f^{(i)} = 
    Y_f g + Y_f(I-P)e_i, \quad i=1,\dots,N,
\end{equation*}
where the second equality holds whenever $g$ satisfies the above constraints. 
Substituting these into~\eqref{eq:tractable_spc} and setting $u_f = U_f g$ 
yields~\eqref{eq:tractable_deepc}.
\end{proof}

\begin{remark}
The equivalence in Theorem~\ref{thm:tractable_deepc} relies on \cite{9705109}. The causal SPC predictor equivalence of \cite{10896586} can also be used, where the empirical distribution is constructed from the causal predictor residuals and an equivalent direct DDPC form is obtained through that equivalence result.
\end{remark}

\subsection{Finite-Sample Guarantee}
\label{sec:guarantees}
We now establish finite-sample probabilistic guarantees for the proposed formulation by characterizing a radius $\varepsilon(\alpha,m_f)$ 
such that  $\mathbb{P}_{y_f|m_f} \in \mathcal{B}_{\varepsilon(\alpha,m_f)}(\hat{\mathbb{P}}_{y_f|m_f})$ with 
probability at least $1-\alpha$, which guarantees the cost bound 
and CVaR constraint satisfaction of Theorem~\ref{thm:tractable_spc} with 
the same probability. To this end, Let $\mathbb{P}$ denote the joint measure of the initial state and noise 
sequences governing the offline data-generating process.  We rely on the following assumption bounding the mismatch
between estimated predictor $\hat{K}$ and the true predictor $K$. 

\begin{Assumption}\label{as:SPC_estimator}
For a given confidence level $\alpha \in (0,1)$, let $\gamma(\alpha) > 0$ be 
such that
\begin{equation}
\mathbb{P}\Big( \|K - \hat{K}\|_r \le \gamma(\alpha) \Big) \ge 
1 - \alpha,
\end{equation}
\end{Assumption}
We refer to~\cite{9992469} and references therein for conditions under which this holds.

We will use the triangular inequality to find a bound on the Wasserstein distance between the true probability distribution $ \mathbb{P}_{y_f|m_f}$ and our empirical distribution $\hat{\mathbb{P}}_{y_f|m_f}$ as
\begin{align*}
&d_{W_r}\!\big(\hat{\mathbb{P}}_{y_f|m_f},\, \mathbb{P}_{y_f|m_f}\big) \le 
d_{W_r}\!\big(\hat{\mathbb{P}}_{y_f|m_f},\, \bar{\mathbb{P}}_{y_f|m_f}\big) \\
&+ d_{W_r}\!\big(\bar{\mathbb{P}}_{y_f|m_f},\, \mathbb{P}_{y_f|m_f}\big).
\end{align*}
The following two lemmas bound each term respectively.
\begin{Lemma}
\label{lem:wasserstein_estimation_error}
Under Assumptions~\ref{as:obs} and~\ref{as:SPC_estimator}, let $\alpha \in (0,1)$ specify a 
risk level. Then,
\begin{equation}
\mathbb{P}\left\{ d_{W_r}\big( \hat{\mathbb{P}}_{y_f|m_f},\, 
\bar{\mathbb{P}}_{y_f|m_f} \big) \ge \gamma\left(\tfrac{\alpha}{2}\right) 
\Psi_N^{1/r} \right\} \le \frac{\alpha}{2}
\end{equation}
where $\Psi_N = \frac{1}{N} \sum_{i=1}^{N} \|Me_i - m_f\|_r^r$.
\end{Lemma}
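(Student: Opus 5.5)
Both $\hat{\mathbb{P}}_{y_f|m_f}$ and $\bar{\mathbb{P}}_{y_f|m_f}$ are uniform empirical measures on $N$ atoms indexed by the same $i$. The plan is therefore to bound their Wasserstein distance by the cost of the identity coupling $\Pi = \frac{1}{N}\sum_{i=1}^N \delta_{(\hat y_f^{(i)}(m_f),\,\bar y_f^{(i)}(m_f))}$, which is feasible in~\eqref{eq:wasserstein}. This gives
\begin{equation*}
d_{W_r}\big(\hat{\mathbb{P}}_{y_f|m_f},\bar{\mathbb{P}}_{y_f|m_f}\big)^r \le \frac{1}{N}\sum_{i=1}^N \big\|\hat y_f^{(i)}(m_f)-\bar y_f^{(i)}(m_f)\big\|_r^r .
\end{equation*}

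Next I compute each atom difference explicitly, using \eqref{eq:empirical_residuals}, \eqref{eq:empirical_yf} and \eqref{eq:oracle_distribution}:
\begin{equation*}
\hat y_f^{(i)}-\bar y_f^{(i)} = \hat K m_f + (Y_f-\hat K M)e_i - K m_f - (Y_f - KM)e_i = (K-\hat K)(Me_i - m_f).
\end{equation*}
The $Y_f e_i$ terms cancel, so the unknown residual samples drop out entirely. By the definition of the induced matrix $r$-norm (the norm used in Assumption~\ref{as:SPC_estimator}),
\begin{equation*}
\|\hat y_f^{(i)}-\bar y_f^{(i)}\|_r \le \|K-\hat K\|_r\,\|Me_i-m_f\|_r .
\end{equation*}
Raising this to the $r$-th power, averaging, and taking the $1/r$-th root yields the deterministic inequality
\begin{equation*}
d_{W_r}\big(\hat{\mathbb{P}}_{y_f|m_f},\bar{\mathbb{P}}_{y_f|m_f}\big)\le \|K-\hat K\|_r\,\Psi_N^{1/r}.
\end{equation*}

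Finally I invoke Assumption~\ref{as:SPC_estimator} with confidence level $\alpha/2$. On the event $\{\|K-\hat K\|_r\le\gamma(\alpha/2)\}$, which has $\mathbb{P}$-probability at least $1-\alpha/2$, the deterministic bound gives $d_{W_r}\le \gamma(\alpha/2)\Psi_N^{1/r}$. The complement therefore has probability at most $\alpha/2$. Assumption~\ref{as:obs} enters only to guarantee that the true predictor $K$ and the oracle residuals $\bar\xi_f^{(i)}$ are well defined.

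There are two small points to handle. First, the lemma states the event with ``$\ge$'' while the assumption gives ``$\le$'' with probability $\ge 1-\alpha/2$. The strict/non-strict boundary mismatch has to be absorbed, either by working with the event $\{d_{W_r} > \gamma\Psi_N^{1/r}\}$ or by noting that the inclusion of events only requires the complement of ``$\le$''. Second, I must make sure the matrix norm in Assumption~\ref{as:SPC_estimator} is the induced $r$-norm, or any norm dominating it, so that the submultiplicative step holds. Beyond these points the argument is a direct coupling estimate, so I expect no real obstacle.
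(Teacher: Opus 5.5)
Your proof is correct and follows essentially the same route as the paper, which defers to Lemma~5 of \cite{9992469}. That argument is the identity-coupling estimate $d_{W_r}(\hat{\mathbb{P}}_{y_f|m_f},\bar{\mathbb{P}}_{y_f|m_f})\le\|K-\hat K\|_r\Psi_N^{1/r}$, with the induced-norm caveat you raise, followed by Assumption~\ref{as:SPC_estimator} at level $\alpha/2$.

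On the boundary issue, only your first fix is sound. The complement of $\{\|K-\hat K\|_r\le\gamma(\alpha/2)\}$ is $\{\|K-\hat K\|_r>\gamma(\alpha/2)\}$, and this need not contain the equality case $d_{W_r}=\gamma(\alpha/2)\Psi_N^{1/r}$. So what you rigorously obtain is the strict-inequality version of the lemma, which is all that Theorem~\ref{thm:main_guarantee} actually needs.
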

\begin{proof}
The proof follows from Lemma~5 in~\cite{9992469}, 
adapted to the $r$-Wasserstein distance  
instead of the $1$-Wasserstein. 
\end{proof}

We rely on the following result to bound
$d_{W_r}\!\big(\bar{\mathbb{P}}_{y_f|m_f},\, \mathbb{P}_{y_f|m_f}\big)$.

\begin{Lemma}\label{lem:wasserstein_concentration}
Assume that $\mathbb{E}_{\mathbb{P}_{y_f|m_f}}\!\left[\|y_f\|_r^q\right] < \infty$
for some  $q > r$. Then, under Assumption~\ref{as:obs},  there exists a constant $C>0$ such that
\begin{equation}
\mathbb{E}_{\mathbb{P}}\!\left[d_{W_r}^r\!\big(\bar{\mathbb{P}}_{y_f|m_f},\, \mathbb{P}_{y_f|m_f}\big)\right] \le \gamma(N),
\end{equation}
where
\begin{equation}
\gamma(N)=
C
\left\{
\begin{array}{@{}l@{\ }l@{}}
N^{-1/2}+N^{-(q-r)/q}, & \! r>d/2,\\[4pt]
N^{-1/2}\log(1+N)+N^{-(q-r)/q}, &\! r=d/2,\\[4pt]
N^{-r/d}+N^{-(q-r)/q}, &\! r\in(0,d/2),
\end{array}
\right.
\end{equation}
with $d=pT_f$. Consequently, for all $\kappa>0$,
\begin{equation}
\mathbb{P}\!\Big(d_{W_r}\!\big(\bar{\mathbb{P}}_{y_f|m_f},\, \mathbb{P}_{y_f|m_f}\big)\ge\kappa\Big)
\le
\frac{\gamma(N)}{\kappa^r}.
\end{equation}
\end{Lemma}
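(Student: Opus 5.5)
The plan is to reduce the claim to the Fournier--Guillin moment bound for empirical measures of i.i.d.\ samples in $\mathbb{R}^d$. Once the expectation bound holds, the tail bound follows from Markov's inequality.

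\textbf{Step 1: identify $\bar{\mathbb{P}}_{y_f|m_f}$ as a translated empirical measure.} By~\eqref{eq:oracle_distribution}, $\bar{\mathbb{P}}_{y_f|m_f}$ is the empirical measure of the points $Km_f+\bar{\xi}_f^{(i)}$, $i=1,\dots,N$. By~\eqref{eq:spc}, $\mathbb{P}_{y_f|m_f}$ is the law of $Km_f+\xi_f$. The translation $y\mapsto y+Km_f$ is an isometry, so it preserves $d_{W_r}$: any coupling of the two noise measures maps to a coupling of the translated measures with the same cost. Hence
\[
d_{W_r}\big(\bar{\mathbb{P}}_{y_f|m_f},\mathbb{P}_{y_f|m_f}\big)=d_{W_r}\Big(\tfrac1N\textstyle\sum_i\delta_{\bar{\xi}_f^{(i)}},\,\mathbb{P}_{\xi_f}\Big).
\]
The moment assumption on $y_f$ transfers to $\xi_f$, since $\|\xi_f\|_r^q\le 2^{q-1}(\|y_f\|_r^q+\|Km_f\|_r^q)$. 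Under Assumption~\ref{as:obs}, the $i$-th column of $Y_f-KM$ equals $\xi_f$ evaluated on the $i$-th window of the offline data, because the predictor identity~\eqref{eq:spc} holds on every length-$L$ window. Each $\bar{\xi}_f^{(i)}$ is therefore distributed as $\mathbb{P}_{\xi_f}$. This uses stationarity of the i.i.d.\ noise and the fact that $\xi_f$ depends only on the noise inside the window.

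\textbf{Step 2: apply the concentration theorem.} Fournier--Guillin (Theorem~1) gives a bound on $\mathbb{E}[d_{W_r}^r]$ for i.i.d.\ samples of a measure on $\mathbb{R}^d$ with finite $q$-th moment, $q>r$. That bound has exactly the three regimes $r>d/2$, $r=d/2$ and $r<d/2$ stated in the lemma, with $d=pT_f$. The $\ell_r$ versus Euclidean norm mismatch only changes the constant, by norm equivalence in $\mathbb{R}^d$, and this is absorbed into $C$. The extra $N^{-(q-r)/q}$ regime applies when $q\neq 2r$; since the lemma states the sum of both terms, it covers all cases.

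\textbf{Step 3: tail bound.} Markov's inequality gives
\[
\mathbb{P}(d_{W_r}\ge\kappa)=\mathbb{P}(d_{W_r}^r\ge\kappa^r)\le \gamma(N)/\kappa^r,
\]
which is the stated consequence.

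\textbf{Main obstacle.} The difficult step is independence. Consecutive residuals $\bar{\xi}_f^{(i)}$ come from overlapping Hankel windows, so they are identically distributed but only $L$-dependent, not independent, whereas Fournier--Guillin requires i.i.d.\ samples. I would first try to split the index set into $L$ interleaved subsequences with stride at least $L$. Samples within each subsequence depend on disjoint noise blocks, which makes them independent; this holds provided the past-state contribution is eliminated through the observability construction, so that $\xi_f$ depends only on the disturbances inside its window.

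Write the empirical measure as the average of the $L$ subsequence empirical measures. Convexity of $d_{W_r}^r$ in its first argument (mix the couplings) then bounds $\mathbb{E}[d_{W_r}^r]$ by the average of the i.i.d.\ bounds at sample size about $N/L$. This changes only the constant $C$, which now depends on $L$. If this decomposition fails, I would fall back on stating the i.i.d.\ assumption on the residual samples implicitly, as the paper's cited concentration result seems to do.
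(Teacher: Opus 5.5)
Your proof is correct, but it handles the dependence between residuals differently from the paper. The paper does not reduce to the i.i.d.\ theorem. It applies Theorem~14 of Fournier--Guillin, which covers stationary $\rho$-mixing sequences with $\sum_n \rho_n<\infty$, directly to $\{\bar{y}_f^{(i)}(m_f)\}$. It checks two things:
\begin{itemize}
\item stationarity, since each residual is the same measurable function of a shifted length-$L$ window of i.i.d.\ noise;
\item $L$-dependence, which gives $\rho_n=0$ for $n\ge L$.
\end{itemize}
You instead use only the i.i.d.\ Theorem~1. You split the indices into $L$ stride-$L$ subsequences, each i.i.d.\ with law $\mathbb{P}_{\xi_f}$, and glue couplings to get convexity of $d_{W_r}^r$ under mixtures. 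The subsequence sizes $N_\ell$ can differ by one, so write the empirical measure as $\sum_\ell (N_\ell/N)\mu_\ell$ rather than a plain average; convexity still applies. Since $N_\ell\ge\lfloor N/L\rfloor$, this only inflates $C$ by a factor polynomial in $L$, and a trivial moment bound covers $N<L$.

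Your route is more elementary and self-contained. It also makes explicit steps the paper skips: translation invariance, the $\ell_r$ versus Euclidean norm, and transferring the moment condition to $\xi_f$. The paper's route is shorter and would extend unchanged to residuals whose dependence merely decays rather than having finite range; blocking would need more work there. Both proofs hinge on the fact you flagged. With $(A,C)$ observable and $T_p\ge n$, the initial state is eliminated exactly through a left inverse of the observability matrix, so $\bar{\xi}_f^{(i)}$ depends only on the noise in window $i$. This does hold, so your fallback to an implicit i.i.d.\ assumption is unnecessary.

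One correction: your remark that the stated sum ``covers all cases'' when $q=2r$ misreads Fournier--Guillin's Theorem~1. That theorem simply excludes $q=2r$, and also $q=d/(d-r)$ when $r<d/2$, so the stated rate is not directly available at those values. The fix is to apply the theorem with a slightly smaller exponent $q'\in(r,q)$. This is admissible because a finite $q$-th moment implies a finite $q'$-th moment, at the cost of a marginally weaker $N^{-(q'-r)/q'}$ term. The paper's statement glosses over the same point.
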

\begin{proof}
Proof is given in the Appendix~\ref{app:wass}.
\end{proof}

\begin{Theorem}
\label{thm:main_guarantee}
Under Assumptions~\ref{as:obs} and~\ref{as:SPC_estimator}, and the 
moment condition of Lemma~\ref{lem:wasserstein_concentration}, let 
$\alpha \in (0,1)$ be a risk level and define the ambiguity set radius
\begin{equation}
\varepsilon(\alpha, m_f) := \varepsilon_1(\alpha) \bigg(\frac{1}{N} \sum_{i=1}^{N} 
\|Me_i - m_f\|_r^r\bigg)^{1/r} + \varepsilon_2(\alpha),
\label{eq:epsilon_def}
\end{equation}
where $\varepsilon_1(\alpha) := \gamma\!\left(\tfrac{\alpha}{2}\right)$ and\
$\varepsilon_2(\alpha) :=
\left(\frac{2\,\gamma(N)}{\alpha}\right)^{1/r}$ where $\gamma(N)$
 is defined in Lemma~\ref{lem:wasserstein_concentration}. 
Let $g^\star$ be a feasible solution of problem~\eqref{eq:tractable_deepc} 
with radius $\varepsilon = \varepsilon(\alpha,m_f)$, and let $\hat{J}(g^\star)$ 
denote the objective value of~\eqref{eq:tractable_deepc} at $g^\star$, and 
define the true expected cost
\begin{equation}
J(g) := \mathbb{E}_{\mathbb{P}_{y_f|m_f}}\!\left[ f_1(U_f g) + f_2(y_f) 
\right].
\end{equation}
Then the following probabilistic guarantees hold with respect to 
$\mathbb{P}$:
\begin{equation}
\begin{aligned}
\mathbb{P}\!\left\{ J(g^\star) \le \hat{J}(g^\star) \right\} &\ge 1 - \alpha, \\
\mathbb{P}\!\left\{ \mathrm{CVaR}^{1-\beta}_{\mathbb{P}_{y_f|m_f}}\!\left( 
h(y_f) \right) \le 0 \right\} &\ge 1 - \alpha.
\end{aligned}
\end{equation}
\end{Theorem}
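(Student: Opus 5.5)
We combine the two lemmas through the triangle inequality stated before Lemma~\ref{lem:wasserstein_estimation_error}, apply a union bound to show that the true conditional distribution lies in the ambiguity set with probability at least $1-\alpha$, and then invoke the worst-case guarantees from Theorem~\ref{thm:tractable_spc}.

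\textbf{Step 1: a high-probability event.} Define the events
\begin{equation*}
E_1 := \Big\{ d_{W_r}\big(\hat{\mathbb{P}}_{y_f|m_f},\bar{\mathbb{P}}_{y_f|m_f}\big) < \varepsilon_1(\alpha)\Psi_N^{1/r} \Big\}, \qquad
E_2 := \Big\{ d_{W_r}\big(\bar{\mathbb{P}}_{y_f|m_f},\mathbb{P}_{y_f|m_f}\big) < \varepsilon_2(\alpha) \Big\}.
\end{equation*}
Since $\varepsilon_1(\alpha)=\gamma(\alpha/2)$, Lemma~\ref{lem:wasserstein_estimation_error} gives $\mathbb{P}(E_1^c)\le \alpha/2$. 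Applying Lemma~\ref{lem:wasserstein_concentration} with $\kappa=\varepsilon_2(\alpha)$, so that $\kappa^r = 2\gamma(N)/\alpha$, gives $\mathbb{P}(E_2^c)\le \gamma(N)\alpha/(2\gamma(N)) = \alpha/2$. By the union bound, $\mathbb{P}(E_1\cap E_2)\ge 1-\alpha$.

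On $E_1\cap E_2$, the triangle inequality for $d_{W_r}$ gives $d_{W_r}(\hat{\mathbb{P}}_{y_f|m_f},\mathbb{P}_{y_f|m_f})\le \varepsilon(\alpha,m_f)$. This requires $\mathbb{P}_{y_f|m_f}\in\mathcal{M}(\mathcal{Y})$, which follows from the moment condition with $q>r$. Hence $\mathbb{P}_{y_f|m_f}\in\mathcal{B}_{\varepsilon(\alpha,m_f)}(\hat{\mathbb{P}}_{y_f|m_f})$ on this event.

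\textbf{Step 2: cost bound.} Fix an outcome in $E_1\cap E_2$ and let $g^\star$ be feasible for \eqref{eq:tractable_deepc}. By Theorem~\ref{thm:tractable_deepc}, $u_f=U_f g^\star$ together with $\hat K m_f = Y_f g^\star$ gives a feasible point of \eqref{eq:tractable_spc} with the same objective value $\hat J(g^\star)$. From the proof of Theorem~\ref{thm:tractable_spc}, we use the pointwise duality bound for convex $L_{\mathrm{obj}}$-Lipschitz $f_2$:
\begin{equation*}
\sup_{Q\in\mathcal{B}_\varepsilon(\hat{\mathbb{P}}_{y_f|m_f})}\mathbb{E}_Q[f_2(y_f)] \le \frac1N\sum_{i=1}^N f_2(\hat y_f^{(i)}) + L_{\mathrm{obj}}\varepsilon.
\end{equation*}
This holds because $|\mathbb{E}_Q f_2-\mathbb{E}_{\hat{\mathbb P}} f_2|\le L_{\mathrm{obj}} d_{W_1}\le L_{\mathrm{obj}} d_{W_r}$. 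Since the true distribution lies in the ball, $J(g^\star)=f_1(U_fg^\star)+\mathbb{E}_{\mathbb{P}_{y_f|m_f}}[f_2(y_f)]\le \hat J(g^\star)$.

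\textbf{Step 3: CVaR bound.} The second part of Theorem~\ref{thm:tractable_spc} states that any feasible $u_f$ satisfies $\mathrm{CVaR}^{1-\beta}_Q(h(y_f))\le 0$ for every $Q$ in the ball. Taking $Q=\mathbb{P}_{y_f|m_f}$ gives the constraint guarantee on the same event. Both inclusions therefore hold on $E_1\cap E_2$, which has probability at least $1-\alpha$.

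\textbf{Main subtlety.} $g^\star$ and $\varepsilon(\alpha,m_f)$ depend on the random offline data, so the argument must be pathwise: the implications in Steps 2--3 are deterministic given membership in the ball, and randomness enters only through $E_1\cap E_2$. A related point is that Lemma~\ref{lem:wasserstein_concentration} is stated for fixed $m_f$. We treat $m_f$ (that is, $u_p,y_p$ and the chosen $u_f$) as fixed, or else as independent of the offline data, when applying the two lemmas. If $u_f$ is itself data-dependent, we would need a uniform-in-$m_f$ version, and we simply note the fixed-$m_f$ interpretation as in the lemmas.
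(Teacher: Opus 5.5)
Your proposal is correct and follows the paper's proof. The triangle inequality, Lemmas~\ref{lem:wasserstein_estimation_error} and~\ref{lem:wasserstein_concentration}, and a union bound place $\mathbb{P}_{y_f|m_f}$ in $\mathcal{B}_{\varepsilon(\alpha,m_f)}(\hat{\mathbb{P}}_{y_f|m_f})$ with probability at least $1-\alpha$. Theorem~\ref{thm:tractable_spc}, applied through Theorem~\ref{thm:tractable_deepc}, then gives both guarantees pathwise. Your Kantorovich--Rubinstein argument with $d_{W_1}\le d_{W_r}$ is a valid and more elementary justification of the pointwise cost bound than the duality results the paper cites.

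The paper also glosses over the fixed-$m_f$ caveat you raise, and it can be removed. The event $\|K-\hat K\|_r\le\gamma(\alpha/2)$ behind Lemma~\ref{lem:wasserstein_estimation_error} gives $d_{W_r}(\hat{\mathbb{P}}_{y_f|m_f},\bar{\mathbb{P}}_{y_f|m_f})\le\|K-\hat K\|_r\Psi_N^{1/r}$ for all $m_f$ at once. The distance in Lemma~\ref{lem:wasserstein_concentration} does not depend on $m_f$, since both measures are translates by $Km_f$ of the residual empirical measure and of $\mathbb{P}_{\xi_f}$ (as that lemma's proof treats them). Hence the good event is uniform in $m_f$ and covers a data-dependent $g^\star$.
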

\begin{proof}
From Lemmas~\ref{lem:wasserstein_estimation_error} 
and~\ref{lem:wasserstein_concentration} and the union bound, it follows that
\begin{align*}
    &\mathbb{P} \big\{ d_{W_r}( \hat{\mathbb{P}}_{y_f|m_f}, \mathbb{P}_{y_f|m_f} ) \ge \varepsilon(\alpha) \big\} \\
    &\le \mathbb{P} \big\{ d_{W_r}( \hat{\mathbb{P}}_{y_f|m_f}, \bar{\mathbb{P}}_{y_f|m_f} ) \ge \varepsilon_1(\alpha) \Psi_N^{1/r} \big\} \\
    &\quad + \mathbb{P} \big\{ d_{W_r}( \bar{\mathbb{P}}_{y_f|m_f}, \mathbb{P}_{y_f|m_f} ) \ge \varepsilon_2(\alpha) \big\} \le \tfrac{\alpha}{2} + \tfrac{\alpha}{2} = \alpha,
\end{align*}
where $\Psi_N = \frac{1}{N} \sum_{i=1}^{N} \|Me_i - m_f\|_r^r$.
This means that with probability at least $1-\alpha$, the true distribution 
$\mathbb{P}_{y_f|m_f}$ lies inside the Wasserstein ball 
$\mathcal{B}_{\varepsilon(\alpha)}\!\big(\hat{\mathbb{P}}_{y_f|m_f}\big)$. 
The result then follows from Theorem~\ref{thm:tractable_spc}.
\end{proof}
\begin{remark}
\label{rem:epsilon_tuning}
The radius $\varepsilon(\alpha,m_f)$ in~\eqref{eq:epsilon_def} may be 
overly conservative in practice~\cite{Esfahani2015DatadrivenDR}. We 
therefore assign separate tuning parameters $\varepsilon_{\mathrm{obj}} = 
\varepsilon_1 \Psi_N^{1/r} + \varepsilon_2$ and a fixed scalar 
$\varepsilon_{\mathrm{con}}$ to the cost and constraint terms, respectively, 
and treat both as offline tuning parameters in Section~\ref{sec:sim}.
\end{remark}

\section{Numerical Example}
\label{sec:sim}
We illustrate the proposed method through simulation on an example system taken from \cite{BRESCHI2023110961}, which takes the form
\begin{equation}
\begin{aligned}
x(t+1) &= A x(t) + B u(t) + K e(t), \\
y(t) &= C x(t) + e(t),
\end{aligned}
\end{equation}
with matrices
\begin{align*}
&A = \begin{bmatrix} 0.7326 & -0.0861 \\ 0.1722 & 0.9909 \end{bmatrix}, \quad
B = \begin{bmatrix} 0.0609 \\ 0.0064 \end{bmatrix}, \\
&K = \begin{bmatrix} -0.5 \\ 0.5 \end{bmatrix}, \quad
C = \begin{bmatrix} 0 & 1.4142 \end{bmatrix}.
\end{align*}
The distribution of the innovation term $e(t)$ varies across experiments and is specified for each simulation scenario. The system is subject to  box constraints on both inputs and outputs
$y(t) \in [-2, 2]$, $ u(t) \in [-2, 2]$. 

We compare the proposed formulation against two existing methods implemented 
in a receding horizon fashion. The first is SPC, in which the output is 
predicted using $\hat{K}m_f$ with $\hat{K}$ estimated as in 
\eqref{eq:ls_khat}. The second is Reg-DeePC (Eq.~23 in~\cite{9705109}), which augments DeePC 
with an $\ell_1$ regularization term $\lambda_g\|g\|_1$, with weight 
selected via grid search and fixed across all experiments. We implement the 
proposed method in its direct DDPC form \eqref{eq:tractable_deepc}, 
referred to as DR-DDPC\footnote{Code available at \url{https://github.com/mirhanu/DR-DDPC}}. For all methods, output constraints, including the CVaR 
constraints in \eqref{eq:tractable_deepc}, are enforced softly by 
augmenting the cost with a weighted sum of squared violations.

The offline data  $\{u_k^d,y_k^d\}_{k=1}^{T}$ is generated from a single trajectory of length $T=200$, where the system is excited by a random control law $u(t) \sim \mathcal{N}(0, I_m)$. The Hankel matrices are constructed with a past horizon $T_p = 5$ and a prediction horizon $T_f = 10$, yielding $N = T - T_p - T_f + 1=186$ residual samples. Results are reported over 50 Monte Carlo simulations, regenerating offline data, disturbance realizations, and initial conditions in 
each iteration, with the same 50 realizations shared across all methods. For the distributionally robust formulations, we use Wasserstein ambiguity 
sets with an $\ell_2$-norm ($r=2$) and, following Remark~\ref{rem:epsilon_tuning}, 
set $\varepsilon_1 = \varepsilon_2 = 10^{-3}$ for the objective term. For computational efficiency, the full set 
of $N$ residuals is used in the cost term, while only $20$ residuals are 
used in the constraint term. Both $\varepsilon_{\mathrm{con}}$ and $\beta$ are swept in the first 
experiment and fixed to $\varepsilon_{\mathrm{con}} = 10^{-4}$ and $\beta=0.2$ in the remaining experiments.

The control objective is to track a reference output over the prediction 
horizon $T_f$. We initially consider a standard quadratic cost
\begin{align}
    f_1(u_f) &= \|u_f\|_{\mathbf{R}}^2, \\
    f_2(y_f) &= \|y_f - y_{r}\|_{\mathbf{Q}}^2,
\end{align}
where $y_{r}$ is the stacked reference output over the prediction horizon, 
and $\mathbf{R} = I_{T_f} \otimes R$, $\mathbf{Q} = I_{T_f} \otimes Q$, 
with $R = 0.05 I_m$ and $Q = I_p$. 

To evaluate closed-loop performance, we use the cumulative average cost over the total simulation duration
$T_{\text{run}}=50$
\begin{equation}
\label{eq:test-cost}
    J_{\text{test}} = \frac{1}{T_{\text{run}}} \sum_{k=0}^{T_{\text{run}}-1} 
    \left( \|u_k\|_R^2 + \|y_k - y_{r,k}\|_Q^2 \right).
\end{equation}

We first test the constraint satisfaction performance with respect to changing $\varepsilon_{\mathrm{con}}$ and $\beta$ values. To obtain an informative 
evaluation of constraint handling, we set $y_{r,k}=0$ and impose the output 
constraint $y_k \in [0,2]$, while keeping the input constraint $u_k \in [-2,2]$. The innovation terms $e(t)$ are sampled from a zero-mean Gaussian distribution with covariance $\Sigma_e = 0.012\mathbf{I}_p$ which corresponds to offline output data $\{y_k^d\}_{k=1}^{T}$ with Signal to Noise Ratio (SNR)  of around 10dB.   We consider $\varepsilon_{\mathrm{con}}$ values in $\{10^{-5}, 10^{-4}, 10^{-3}, 10^{-2}, 10^{-1}, 1\}$ and $\beta$ values in $\{0.1, 0.2, 0.5, 0.7, 0.9\}$. The corresponding average constraint violations and the cost performances \eqref{eq:test-cost} of the DR-DDPC method over 50 Monte Carlo simulations are given in Figure~\ref{fig:const-test}. The results show that the empirical violation rate remains within the prescribed risk level $\beta$ for all scenarios. A clear trade-off is observed regarding the ambiguity radius $\varepsilon_{\mathrm{con}}$, where larger values yield stricter constraint satisfaction but result in higher performance costs. Conversely, increasing  $\beta$ allows for lower costs at the expense of more frequent violations.  For comparison, the average violation rates for SPC and Reg-DeePC are $20.96\%$ and $17.80\%$ with average performance costs of $0.2615$ and $0.3276$, respectively. 

\begin{figure}[htbp]
     \centering
     \begin{subfigure}[b]{0.35\textwidth}
         \centering
         \includegraphics[width=\textwidth]{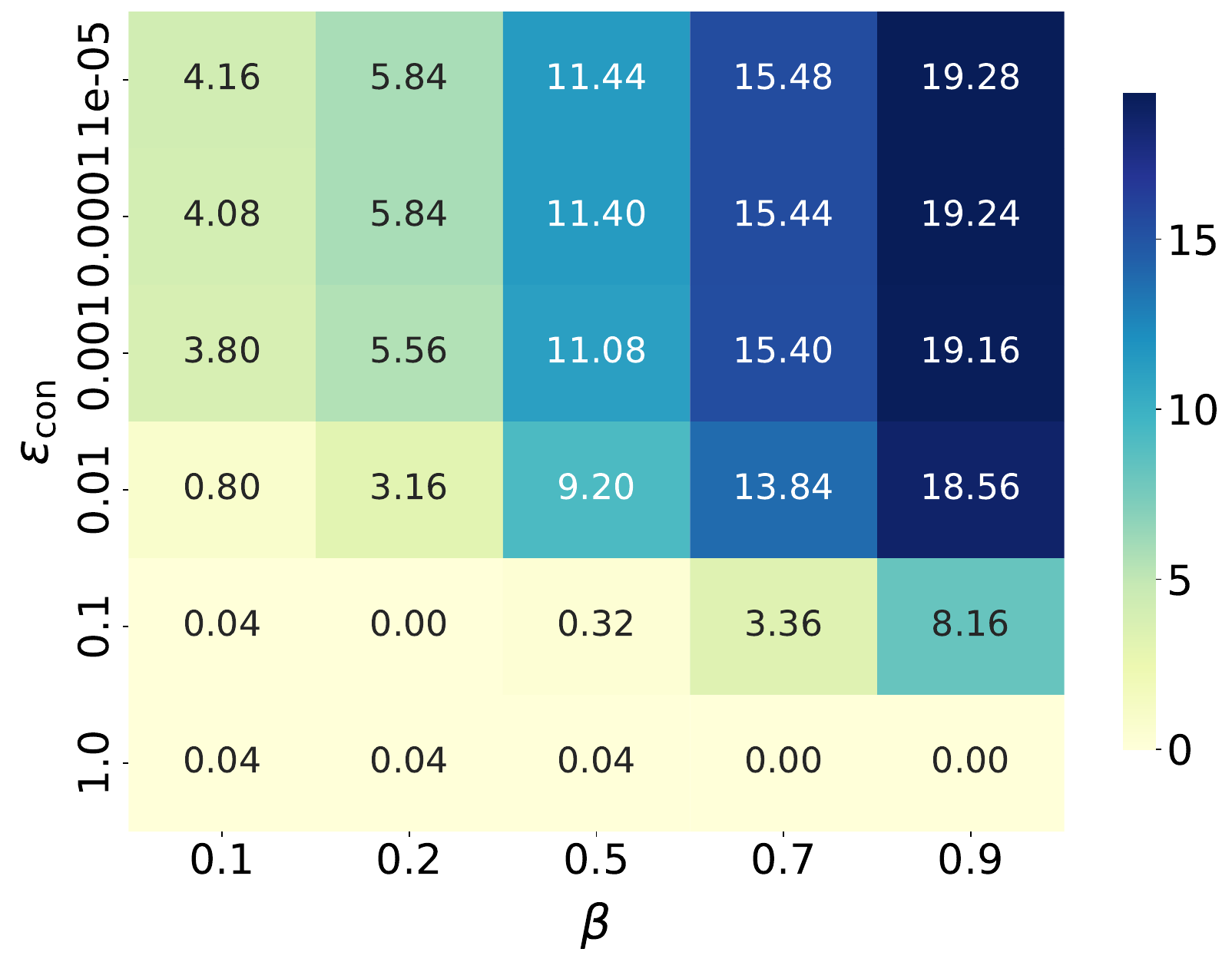}
         \caption{Mean Violation Rate (\%)}
         \label{fig:violation_rate}
     \end{subfigure} 
     \begin{subfigure}[b]{0.35\textwidth}
         \centering
         \includegraphics[width=\textwidth]{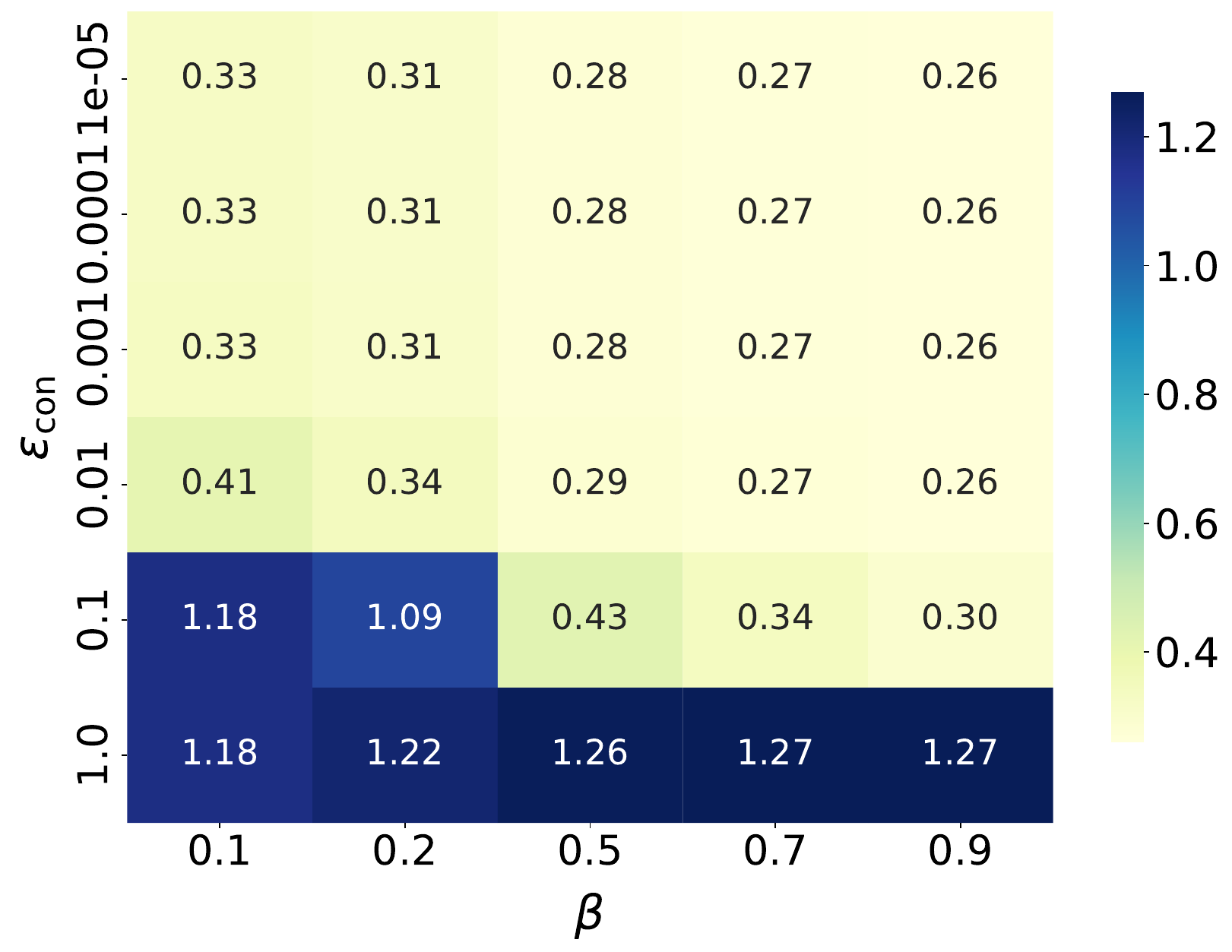}
         \caption{Mean Performance Cost}
         \label{fig:perf_cost}
     \end{subfigure}
     \caption{Parameter sweep results for $\varepsilon_{\mathrm{con}}$ vs $\beta$ using the DR-DDPC controller \eqref{eq:tractable_deepc}.}
     \label{fig:const-test}
\end{figure}

Next, we compare tracking performance across varying noise levels using a 
sinusoidal reference $y_{r,k} = \sin\!\left(\frac{2\pi k}{T_{\text{run}}}\right)$. 
Under zero-mean Gaussian innovations (Table~\ref{tab:gaussian_combined}), Reg-DeePC 
is outperformed by the others, while SPC and DR-DDPC perform nearly identically. 
This is expected as, for quadratic costs, the DR-DDPC objective reduces to the cost 
of the mean scenario, which coincides with the nominal SPC prediction in expectation 
when residuals have zero mean. When the innovation mean is shifted to $0.05$ 
(Table~\ref{tab:gaussian_combined}), a performance gap emerges between SPC and 
DR-DDPC, with DR-DDPC outperforming SPC, as it can account for the bias in the 
predictor residuals.

Finally, we repeat the zero-mean Gaussian experiments while varying the tracking cost function $f_2(y_f)$ to evaluate performance in cases where the scenario spread, rather than just the mean, influences the cost in \eqref{eq:tractable_deepc}. We test an $\ell_1$ cost $f_2(y_f) = \|y_f - y_{r}\|_1$, and an asymmetric linear cost $f_2(y_f) = 2\|(y_f - y_{r})_+\|_1 + \|(y_f - y_{r})_-\|_1$ where $(\cdot)_+$ and $(\cdot)_-$ denote the positive and negative parts to penalize overshooting more heavily. In each case $J_{\text{test}}$ is modified by replacing the quadratic output term in \eqref{eq:test-cost} with the respective cost. As shown in Figure~\ref{fig:noise_sweep_costs}, a performance gap between DR-DDPC and SPC emerges in both cases, confirming that the advantage of DR-DDPC grows when the cost is sensitive to the full scenario distribution rather than just its mean.

\begin{table}[t]
\centering
\caption{Mean $\pm$ Std of $J_{\text{test}}$ for three covariance levels of Gaussian
innovation terms: zero-mean (top) and mean $\mu = 0.05$ (bottom).}
\label{tab:gaussian_combined}
\resizebox{\columnwidth}{!}{%
\begin{tabular}{llccc}
\toprule
& & \multicolumn{3}{c}{$\Sigma_e$} \\
\cmidrule(lr){3-5}
& Method & $0.012$ & $0.0012$ & $1.2\times10^{-5}$ \\
\midrule
\multicolumn{5}{l}{\textit{(a) Zero mean}} \\[2pt]
\multirow{3}{*}{\rotatebox[origin=c]{90}{Cost}}
& SPC       & $0.3062 \pm 0.2382$ & $0.2674 \pm 0.2287$ & $0.2613 \pm 0.2270$ \\
& DR-DDPC   & $0.3065 \pm 0.2381$ & $0.2674 \pm 0.2287$ & $0.2612 \pm 0.2270$ \\
& Reg-DeePC & $0.4740 \pm 0.3026$ & $0.3093 \pm 0.2287$ & $0.2877 \pm 0.2250$ \\
\midrule
\multicolumn{5}{l}{\textit{(b) Mean $\mu = 0.05$}} \\[2pt]
\multirow{3}{*}{\rotatebox[origin=c]{90}{Cost}}
& SPC       & $0.2627 \pm 0.1901$ & $0.2246 \pm 0.1799$ & $0.2124 \pm 0.1772$ \\
& DR-DDPC   & $0.2558 \pm 0.1889$ & $0.2196 \pm 0.1795$ & $0.2119 \pm 0.1771$ \\
& Reg-DeePC & $0.5141 \pm 0.2378$ & $0.3211 \pm 0.1811$ & $0.2530 \pm 0.1771$ \\
\bottomrule
\end{tabular}
}
\end{table}

\begin{figure}[htbp]
    \centering
    \begin{subfigure}[b]{0.23\textwidth}
        \centering
        \includegraphics[width=\textwidth]{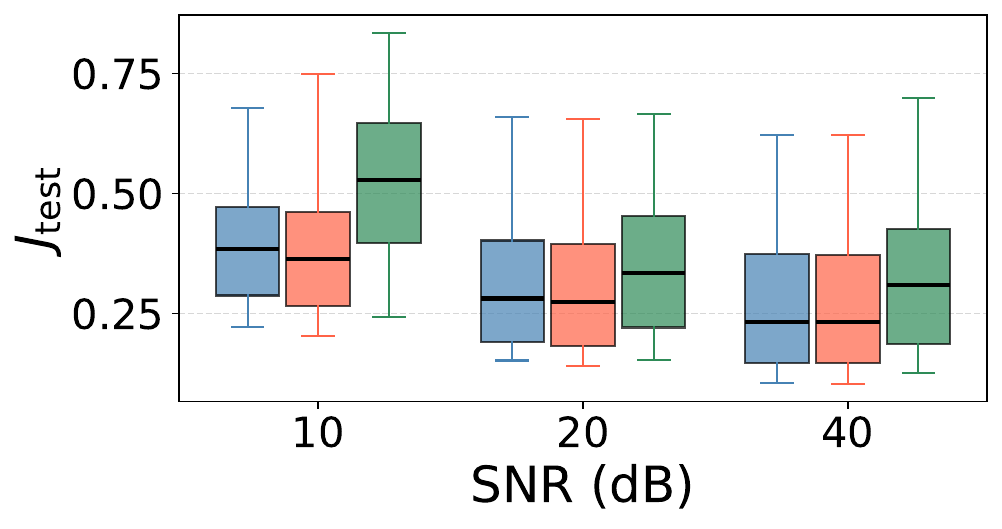}
        \caption{$\ell_1$ cost}
        \label{fig:noise_sweep_l1}
    \end{subfigure}
    \begin{subfigure}[b]{0.23\textwidth}
        \centering
        \includegraphics[width=\textwidth]{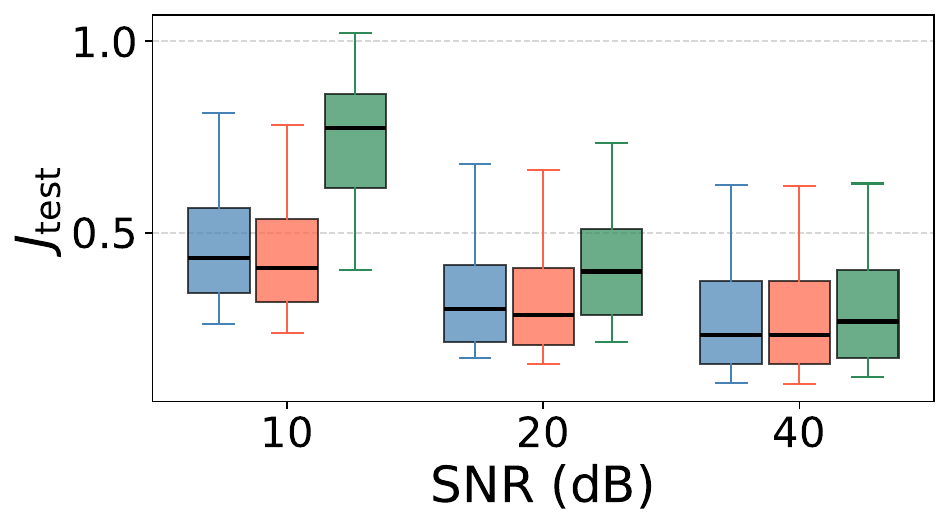}
        \caption{Asymmetric cost}
        \label{fig:noise_sweep_asymmetric}
    \end{subfigure}
    \begin{subfigure}[b]{0.33\textwidth}
        \centering
        \includegraphics[width=\textwidth]{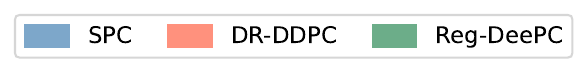}
    \end{subfigure}
    \caption{Cost performance $J_{\text{test}}$ across covariance levels for different cost functions under zero-mean Gaussian innovation terms.}
    \label{fig:noise_sweep_costs}
\end{figure}

\section{Conclusion}
\label{sec:Conc}
In this work, we proposed a distributionally robust data-driven predictive control framework for stochastic LTI systems with unknown dynamics and disturbance distributions. We presented two equivalent formulations, one based on the SPC predictor and one in direct data-driven form, and established finite-sample guarantees on the expected cost and output constraint satisfaction under the true disturbance distribution. Numerical simulations validated the framework against existing methods, showing that DR-DDPC and SPC perform similarly under zero-mean Gaussian innovation terms with quadratic cost, while DR-DDPC outperforms SPC under nonzero-mean innovations and non-quadratic cost functions. Future work includes establishing stability and recursive feasibility guarantees for the closed-loop system.

\appendix

\section{Proof of Theorem~\ref{thm:tractable_spc}}
\label{app:th2}
\begin{proof}
    Applying Th.~1 of \cite{Gao2016DistributionallyRS} and Lemma~A.2 
    of \cite{9488221} to the convex Lipschitz function $f_2$ with 
    constant $L_{\mathrm{obj}}$ directly gives
    \begin{align}
    &f_1(u_f) + \sup_{Q \in \mathcal{B}_\varepsilon(\hat{\mathbb{P}}_{y_f|m_f})} 
    \mathbb{E}_Q[f_2(y_f)] \nonumber \\
    &\le f_1(u_f) + L_{\mathrm{obj}}\varepsilon + 
    \frac{1}{N}\sum_{i=1}^{N} f_2(\hat{K}m_f + \hat{\xi}_f^{(i)}),
    \end{align}
    with equality when $\mathcal{Y} = \mathbb{R}^{pT_f}$. For the 
    CVaR constraint, expanding the definition of CVaR and applying 
    the max-min inequality gives
    \begin{align}
    &\sup_{Q \in \mathcal{B}_\varepsilon(\hat{\mathbb{P}}_{y_f|m_f})} 
    \mathrm{CVaR}^{1-\beta}_Q(h(y_f)) \nonumber \\
    &\le \inf_{\tau \in \mathbb{R}} \left\{ \tau + 
    \sup_{Q \in \mathcal{B}_\varepsilon(\hat{\mathbb{P}}_{y_f|m_f})}
    \frac{1}{\beta}\mathbb{E}_Q\left[(h(y_f)-\tau)_+\right] \right\}.
    \end{align}
    Since $(h(y_f)-\tau)_+$ is convex and Lipschitz continuous in 
    $y_f$ with constant $L_{\mathrm{con}}$, applying Th.~1 of 
    \cite{Gao2016DistributionallyRS} and Lemma~A.2 of \cite{9488221} 
    to the inner supremum yields
    \begin{align}
    &\sup_{Q \in \mathcal{B}_\varepsilon(\hat{\mathbb{P}}_{y_f|m_f})} 
    \mathrm{CVaR}^{1-\beta}_Q(h(y_f)) \nonumber \\
    &\le \inf_{\tau \in \mathbb{R}} \left\{ \tau + 
    \frac{L_{\mathrm{con}}\varepsilon}{\beta} + 
    \frac{1}{\beta N}\sum_{i=1}^{N}
    \left(h(\hat{K}m_f + \hat{\xi}_f^{(i)})-\tau\right)_+ \right\}.
    \end{align}
    Introducing slack variables $s_i \ge 
    (h(\hat{K}m_f+\hat{\xi}_f^{(i)})-\tau)_+ \ge 0$, 
    the infimum is at most zero whenever 
    \eqref{eq:tr_cvar}--\eqref{eq:tr_slack_pos} are satisfied, 
    completing the proof.
\end{proof}

\section{ Proof of Lemma~\ref{lem:wasserstein_concentration}}
\label{app:wass}
We first introduce the dependence concepts used in the proof.
\begin{definition}[$m$-dependence \cite{Gu2023CentralLT}]
\label{def:mdep}
A sequence $\{Z_i\}_{i \geq 1}$ is \emph{$m$-dependent} if for every $n$ 
and $j \geq m+1$, the block $(Z_{n+m+1}, \ldots, Z_{n+j})$ is independent 
of $(Z_1, \ldots, Z_n)$.
\end{definition}

\begin{definition}[$\rho$-mixing {\cite{Fournier2013OnTR}}]
\label{def:rhomix}
A stationary sequence $(X_n)_{n\ge1}$ with distribution $\mu$ is 
$\rho$-mixing if there exists $\rho:\mathbb{N}\to\mathbb{R}_+$ with 
$\rho_n \to 0$ such that for all $f,g\in L^2(\mu)$ and $i,j\ge1$,
$\mathrm{Cov}(f(X_i),g(X_j)) \le \rho_{|i-j|}
\sqrt{\mathrm{Var}(f(X_i))\,\mathrm{Var}(g(X_j))}.$
\end{definition}

\begin{proof}[Proof of Lemma~\ref{lem:wasserstein_concentration}]
We verify the conditions of Theorem~14 in \cite{Fournier2013OnTR}, namely 
stationarity and $\rho$-mixing with $\sum_{n\ge 0}\rho_n < \infty$, for 
the sequence $\{\bar{y}_f^{(i)}(m_f)\}_{i\ge1}$ obtained by hypothetically 
extending the offline trajectory.

\textit{Stationarity.} Since $\bar{y}_f^{(i)}(m_f) = Km_f + 
\bar{\xi}_f^{(i)}$ and $Km_f$ is constant, stationarity of 
$\{\bar{y}_f^{(i)}(m_f)\}_{i\ge1}$ is equivalent to stationarity of 
$\{\bar{\xi}_f^{(i)}\}_{i\ge1}$. Each $\bar{\xi}_f^{(i)}$ is a measurable 
function of the noise window $\{w_k,v_k\}_{k=i}^{i+L-1}$ with $L=T_p+T_f$. 
Since $\{w_k\}$ and $\{v_k\}$ are i.i.d., shifted windows have identical 
joint distributions, hence $\{\bar{\xi}_f^{(i)}\}_{i\ge1}$ is stationary.

\textit{$L$-dependence and $\rho$-mixing.} For $|i-j|\ge L$, the noise 
windows of $\bar{\xi}_f^{(i)}$ and $\bar{\xi}_f^{(j)}$ do not overlap and 
are therefore independent by the i.i.d.\ assumption. Hence 
$\{\bar{y}_f^{(i)}(m_f)\}_{i\ge1}$ is $L$-dependent, which implies 
$\rho$-mixing with $\rho_n = 0$ for all $n \ge L$, so 
$\sum_{n\ge 0}\rho_n < \infty$.

All conditions of Theorem~14 in \cite{Fournier2013OnTR} are now satisfied, 
and together with the moment condition $\mathbb{E}_{\mathbb{P}_{y_f|m_f}}
[\|y_f\|_r^q] < \infty$ for some $q>r$, applying that theorem gives
\begin{equation*}
\mathbb{E}_{\mathbb{P}}\!\left[d_{W_r}^r\!\left(\bar{\mathbb{P}}_{y_f|m_f},\, 
\mathbb{P}_{y_f|m_f}\right)\right] \le \gamma(N).
\end{equation*}
Applying Markov's inequality then yields
\begin{equation*}
\mathbb{P}\!\left(d_{W_r}\!\left(\bar{\mathbb{P}}_{y_f|m_f}, 
\mathbb{P}_{y_f|m_f}\right)\ge\kappa\right) \le 
\frac{\gamma(N)}{\kappa^r}. \qed
\end{equation*}
\end{proof}

\bibliographystyle{plain}
\bibliography{ref.bib}

\begin{thebibliography}{10}

\bibitem{9109670}
Julian Berberich, Johannes Köhler, Matthias~A. Müller, and Frank Allgöwer.
\newblock Data-driven model predictive control with stability and robustness guarantees.
\newblock {\em IEEE Transactions on Automatic Control}, 66(4):1702--1717, 2021.

\bibitem{BRESCHI2023110961}
Valentina Breschi, Alessandro Chiuso, and Simone Formentin.
\newblock Data-driven predictive control in a stochastic setting: a unified framework.
\newblock {\em Automatica}, 152:110961, 2023.

\bibitem{9488221}
Jeremy Coulson, John Lygeros, and Florian D\"orfler.
\newblock Distributionally robust chance constrained data-enabled predictive control.
\newblock {\em IEEE Transactions on Automatic Control}, 67(7):3289--3304, 2022.

\bibitem{8795639}
Jeremy Coulson, John Lygeros, and Florian Dörfler.
\newblock Data-enabled predictive control: In the shallows of the deepc.
\newblock In {\em 2019 18th European Control Conference (ECC)}, pages 307--312, 2019.

\bibitem{9705109}
Florian Dörfler, Jeremy Coulson, and Ivan Markovsky.
\newblock Bridging direct and indirect data-driven control formulations via regularizations and relaxations.
\newblock {\em IEEE Transactions on Automatic Control}, 68(2):883--897, 2023.

\bibitem{Esfahani2015DatadrivenDR}
Peyman~Mohajerin Esfahani and Daniel Kuhn.
\newblock Data-driven distributionally robust optimization using the wasserstein metric: performance guarantees and tractable reformulations.
\newblock {\em Mathematical Programming}, 171:115 -- 166, 2015.

\bibitem{FAVOREEL19994004}
Wouter Favoreel, Bart~De Moor, and Michel Gevers.
\newblock Spc: Subspace predictive control.
\newblock {\em IFAC Proceedings Volumes}, 32(2):4004--4009, 1999.
\newblock 14th IFAC World Congress 1999, Beijing, Chia, 5-9 July.

\bibitem{Fournier2013OnTR}
Nicolas Fournier and Arnaud Guillin.
\newblock On the rate of convergence in wasserstein distance of the empirical measure.
\newblock {\em Probability Theory and Related Fields}, 162:707 -- 738, 2013.

\bibitem{Gao2016DistributionallyRS}
Rui Gao and Anton~J. Kleywegt.
\newblock Distributionally robust stochastic optimization with wasserstein distance.
\newblock {\em Math. Oper. Res.}, 48:603--655, 2016.

\bibitem{Gu2023CentralLT}
Wangyun Gu and Lixu Zhang.
\newblock Central limit theorem for m-dependent random variables under sub-linear expectations.
\newblock 2023.

\bibitem{berberichConst}
Christian Klöppelt, Julian Berberich, Frank Allgöwer, and Matthias~A. Müller.
\newblock A novel constraint-tightening approach for robust data-driven predictive control.
\newblock {\em International Journal of Robust and Nonlinear Control}, 35(7):2566--2587, 2025.

\bibitem{9992469}
Francesco Micheli, Tyler Summers, and John Lygeros.
\newblock Data-driven distributionally robust mpc for systems with uncertain dynamics.
\newblock In {\em 2022 IEEE 61st Conference on Decision and Control (CDC)}, pages 4788--4793, 2022.

\bibitem{nemirovski}
Arkadi Nemirovski and Alexander Shapiro.
\newblock Convex approximations of chance constrained programs.
\newblock {\em SIAM Journal on Optimization}, 17(4):969--996, 2007.

\bibitem{9999463}
Guanru Pan, Ruchuan Ou, and Timm Faulwasser.
\newblock On a stochastic fundamental lemma and its use for data-driven optimal control.
\newblock {\em IEEE Transactions on Automatic Control}, 68(10):5922--5937, 2023.

\bibitem{10896586}
Malika Sader, Yibo Wang, Dexian Huang, Chao Shang, and Biao Huang.
\newblock Causality-informed data-driven predictive control.
\newblock {\em IEEE Transactions on Control Systems Technology}, 33(5):1921--1928, 2025.

\bibitem{11192654}
Johannes Teutsch, Sebastian Kerz, Dirk Wollherr, and Marion Leibold.
\newblock Sampling-based stochastic data-driven predictive control under data uncertainty.
\newblock {\em IEEE Transactions on Automatic Control}, 71(3):1924--1931, 2026.

\bibitem{willems2005note}
Jan~C Willems, Paolo Rapisarda, Ivan Markovsky, and Bart~LM De~Moor.
\newblock A note on persistency of excitation.
\newblock {\em Systems \& Control Letters}, 54(4):325--329, 2005.

\end{thebibliography}
\end{document}